\documentclass{article}

\usepackage{orcidlink}
\usepackage{fancyhdr}
\usepackage{amsmath,amssymb}
\usepackage{amsfonts}
\usepackage{graphicx} 
\usepackage{verbatim}
\usepackage{pgfplots}
\pgfplotsset{compat=1.17}
\usetikzlibrary{positioning,arrows,calc,math,angles,quotes}
\usepackage{blochsphere}
\usepackage{etoolbox}

\usepackage[T1]{fontenc}
\usepackage[utf8]{inputenc}
\usepackage{lmodern}
\usepackage{microtype}
\usepackage{amsmath,amssymb,amsthm,mathtools,mathrsfs,bm}
\usepackage{booktabs,array,longtable,multirow}
\usepackage{enumitem}
\usepackage{hyperref}
\usepackage{float}
\usepackage{tikz}
\usetikzlibrary{arrows.meta,positioning,calc}
\usepackage{comment}
\usetikzlibrary{quantikz2}
\usepackage{braket}
\usepackage{comment}
\usepackage{mathtools}
\usetikzlibrary{matrix, arrows.meta}
\usepackage{amsthm}
\usepackage{fancyhdr}
\newtheorem{theorem}{Theorem}[section] 
\newtheorem{lemma}[theorem]{Lemma}     
\newtheorem{proposition}[theorem]{Proposition}

\theoremstyle{definition}

\theoremstyle{remark}
\newtheorem*{remark}{Remark}
\theoremstyle{conjecture}
\newtheorem*{conjecture}{Conjecture}

\newcommand{\Cl}{\mathcal{C}\ell}

\begin{document}

\title{Quantum Algorithm for Clifford Multiplication}

\author{%
Kagwe A. Muchane\orcidlink{0009-0008-6649-3975}%
\thanks{\textcopyright\ Kagwave.}
}

\date{}

\setcounter{footnote}{1}

\maketitle

\pagestyle{plain}

\begin{abstract}
Given two dense multivectors of the Clifford algebra $\mathcal{C}\ell(V, Q)$ with $N=2^{p+q}$ coefficients, the fastest known classical algorithms compute their geometric product in $O(N^{\omega/2})$ arithmetic operations, where $\omega$ denotes the matrix multiplication exponent. I show that, under amplitude encoding, a quantum computer executes the geometric product in $O(\operatorname{polylog} N)$ time, using logarithmic space with sublogarithmic circuit depth. This exponential speedup establishes Clifford multiplication as a quantum primitive, providing an efficient computational foundation for quantum geometric algorithms and
relativistic simulations.
\end{abstract}
\noindent\textbf{Keywords:} Clifford algebra; geometric product; quantum algorithms; quantum Fourier transform; multivectors; cocycle-twisted convolution.
\medskip

\noindent\textbf{AMS subject classifications:} 15A66; 81P68; 68Q12; 68W10; 81R05.

\newpage

\section{Introduction}

In his 1844 \textit{Ausdehnungslehre} \cite{grassmann1844}, Hermann Grassmann introduced the non-metric exterior product, later extending his theory to the central product, which incorporated metric notions of magnitude and projection \cite{grassmann1862}. William Kingdon Clifford then synthesized this dual product system with Hamilton's quaternions into a single associative engine by imposing the vector squaring relation \(v^2=Q(v)\), calling the resulting algebra \emph{geometric algebra} \cite{clifford1878applications}. By unifying the inner and exterior products into one associative multiplication, the geometric product has become a common language for Euclidean, projective, conformal, and spacetime geometry \cite{Dorst2007GACS,Lounesto2001}, as well as 21st-century applications such as computer vision, robotics, computer graphics, and geometric machine learning \cite{wareham2005applications, bayro2001geometric, bayro2001geometric_neural}. Recent Clifford-based neural architectures replace conventional tensor operations by geometric products, allowing network layers to respect the geometric symmetries of the data \cite{ruhe2023clifford, ruhe2023geometric, buchholz2008clifford, brehmer2023geometric}. Likewise, Hestenes' spacetime algebra reformulates relativistic mechanics and field theory, where Lorentz transformations, spinors, and Minkowski spacetime are all expressed intuitively through Clifford multiplication \cite{Hestenes1966,HestenesSobczyk1984}. While broadly applicable, dense geometric algebra computations have proven to be computationally demanding. The difficulty is fundamentally combinatorial. Because multiplying two dense multivectors can mean combining exponentially many blade pairs, Clifford multiplication has represented one of the principal computational bottlenecks for large-scale geometric algebra computations.

The most powerful quantum algorithms demonstrate that, on a quantum computer, harmonic structure becomes computational advantage. The quantum Fourier transform \cite{coppersmith1994approximate} changes states to a basis in which that harmony becomes visible, forming the foundation of algorithms such as Shor's factoring algorithm \cite{shor1997polynomial} and hidden subgroup algorithms \cite{kitaev1995quantum,lomont2004hidden}. In my most recent work, I discovered that Clifford's geometric product is not the sum of two independent products, but rather the harmonic decomposition of exchange under the action of a transposition $\tau\in S_2$. This replaces the traditional axiomatic view of the inner and exterior products as independent operations; they emerge as Fourier sectors of the exchange orbit of $\tau$. That finding naturally raised the following question: \textit{If the geometric product already possesses an intrinsic harmonic structure, can Clifford multiplication itself be realized as a quantum harmonic computation?}

The answer is \textit{yes}. Admittedly, this is possible because the geometric product can be seen as cocycle-twisted convolution over the group $(\mathbb Z_2)^n$. I formulate quantum Clifford multiplication as a harmonic computation whose underlying group operation is separated from the orientation information carried by the cocycle. Unlike operator-based quantum treatments of Clifford algebras, which execute Clifford multiplication through the action of quantum operators on states \cite{muchane2025stateoperator}, this algorithm computes the bilinear coefficient product without realizing the multivectors as operators. It computes dense Clifford multiplication with polylogarithmic gate complexity in the coefficient dimension, fitting neatly within the standard amplitude-encoding paradigm of quantum algorithms. The output is a quantum state representing the coefficient vector of the geometric product, from which observables of the product may be efficiently estimated.

The historical perception of Clifford algebra as a specialized mathematical tool has persisted despite its conceptual and interpretational advantages. One contributing factor is that, until recently, those advantages had not been accompanied by a computational advantage. This work permanently changes that perspective; the circuit developed here is universal, and independent of the input multivectors. Efficient quantum multiplication of multivectors opens the possibility of quantum-native algorithms for geometric machine learning, Clifford-valued signal processing, spacetime and relativistic simulation, geometric robotics, and other computational problems whose natural language is geometric algebra.
\section{Quantum Clifford Multiplication}
Dense Clifford multiplication is the function problem of computing the geometric product
\[
f:\mathcal C\ell_{p,q}\times\mathcal C\ell_{p,q}\rightarrow
\mathcal C\ell_{p,q},
\qquad
(A,B)\longmapsto AB.
\]
Every element in \( \mathcal{C}\ell_{p,q}\) is a linear combination \(A=\sum_{x\in I}a_xe_x\), where \(I\) is the set of all \(N=2^n\) basis blades and \(n=p+q\). Since \(|I|=N\), each dense multivector is specified by \(N\) coefficients. By bilinearity of the geometric product,
\begin{equation}
f(A,B)
=
\sum_{x\in I}\sum_{y\in I}
a_xb_y(e_xe_y).  
\end{equation}
There are two ways to measure the complexity of this problem. Treating the coefficient arrays themselves as the input, dense Clifford multiplication is polynomial in the input size \(N=2^n\) and therefore belongs to \(\mathsf{FP}\) under the standard bit-complexity model. Treating the geometric dimension \(n\) as the scaling parameter exposes the combinatorial growth of the algebra. Since \(N=2^n\), every known classical algorithm for dense multivectors requires exponential time in \(n\), placing the problem in \(\mathsf{FEXPTIME}\). Throughout this paper we adopt the latter viewpoint, as the uniform, representation-independent setting for comparing classical and quantum algorithms on families of Clifford algebras. For a classical computer, the product of two dense multivectors requires evaluating one interaction for every ordered pair of basis blades. Since an $n$-dimensional Clifford algebra contains $N=2^n$ basis blades, the straightforward coordinate expansion performs
\[
\Theta(N^2)=\Theta(4^n)
\]
blade interactions.

Modern classical improvements focus on reducing the cost of individual blade interactions or exploiting alternative representations. Basis blades are encoded as bitstrings, output blades are computed by bitwise XOR, sign is evaluated using parity operations, and SIMD or GPU implementations evaluate many blade products simultaneously
\cite{Ablamowicz2009Complexity,Dorst2007GACS}. These optimizations significantly improve constant factors and practical performance, but they do not alter the asymptotic complexity of dense multiplication, as every nonzero coefficient of the first multivector may still interact with every nonzero coefficient of the second. Consequently, dense Clifford multiplication retains quadratic complexity in the coefficient dimension, or equivalently, exponential complexity in the geometric dimension \(n\). The fastest known asymptotic algorithms instead exploit faithful matrix representations of finite-dimensional Clifford algebras, thereby reducing dense Clifford multiplication to matrix multiplication. The best current
classical matrix multiplication algorithms run in time
$
O\!\left(N^{\omega/2}\right),
$ with the current optimal upper bound being \(\omega < 2.371339\) \cite{AlmanWilliams2021} for the matrix multiplication exponent. This improves the dependence on the coefficient dimension from quadratic to
$
O(N^{1.186}),
$
but the computation remains exponential in the geometric dimension,
\[
O\!\left((2^n)^{\omega/2}\right)
=
O\!\left(2^{(\omega/2)n}\right).
\]

The quantum algorithm presented in this paper follows a different approach, exploiting the algebraic structure of the geometric product itself. I present an efficient quantum circuit whose complexity scales linearly with the geometric dimension, with the precise gate and depth bounds established in Theorem~\ref{qcm}. To expose the algebra responsible for this speedup, we first examine the multiplication law of the basis blades. Their binary encoding separates Clifford multiplication into a group operation on blade indices together with a cocycle encoding orientation.

\paragraph{Basis encoding.}
Every basis blade is uniquely determined by a subset of the basis vectors and is therefore identified with its characteristic bit vector. Writing
\[
x=\sum_{i=1}^n x_i2^{\,i-1},\qquad x_i\in\{0,1\},
\]
we identify the integer bitmask \(x\) with the blade
\[
e_x=e_1^{x_1}e_2^{x_2}\cdots e_n^{x_n}.
\]
Throughout, bitstrings are displayed in the conventional most-significant-bit-first
order \((x_n,\ldots,x_1)\). The displayed bitstring records the presence of generators \((e_n,\ldots,e_1)\), while the integer encoding is little-endian.

\begin{table}[H]
\centering
\caption{Binary encoding of basis blades together with representative products in
$\Cl_{3,0}(\mathbb R)$. The output blade is determined by bitwise XOR, while
the Clifford cocycle contributes the orientation sign.}
\label{tab:cl3_encoding_products}
\begin{tabular}{cc|cc|ccc}
\toprule
$e_x$ & $x$ & $e_y$ & $y$ & $x\oplus y$ & $\chi_{3,0}(x,y)$ & $e_xe_y$ \\
\midrule
$1$        & 000 & $1$        & 000 & 000 & $+1$ & $1$ \\
$e_1$      & 001 & $e_2$      & 010 & 011 & $+1$ & $e_{12}$ \\
$e_2$      & 010 & $e_{13}$   & 101 & 111 & $-1$ & $-e_{123}$ \\
$e_{12}$   & 011 & $e_{123}$  & 111 & 100 & $-1$ & $-e_3$ \\
$e_3$      & 100 & $e_1$      & 001 & 101 & $-1$ & $-e_{13}$ \\
$e_{13}$   & 101 & $e_{12}$   & 011 & 110 & $+1$ & $e_{23}$ \\
$e_{23}$   & 110 & $e_3$      & 100 & 010 & $+1$ & $e_2$ \\
$e_{123}$  & 111 & $e_{23}$   & 110 & 001 & $-1$ & $-e_1$ \\
\bottomrule
\end{tabular}
\end{table}

The geometric product of basis blades separates into an output index and a
phase:
\[
e_xe_y=\chi_{p,q}(x,y)e_{x\oplus y},
\]
where $\oplus$ denotes bitwise XOR and
$\chi_{p,q}(x,y)\in\{\pm1\}$ is the Clifford cocycle. The XOR operation determines the output blade, while the Clifford cocycle encodes the orientation information from basis exchanges together with the metric signs determined by the signature $(p,q)$.

\subsubsection{Twisted Convolution}

In recent work on generalized Clifford geometry \cite{Muchane2026Rotor}, I showed that the geometric product can be understood through a harmonic decomposition under the exchange action of transposition $\tau$. Though the full theory is developed elsewhere, it suffices to note that the $\mathbb{Z}_2$ Fourier transform diagonalizes the exchange operator $\tau$, whose spectral projectors $\frac{1}{2}(1 \pm \tau)$ separate an ordered product $ab$ into its symmetric and antisymmetric exchange sectors. For two-vectors, the exchange orbit is $(ab,\tau(ab))=(ab,ba)$, and the normalized character table of $\mathbb{C}[\mathbb{Z}_2]$ acts by
\[
F_2
\begin{pmatrix}
ab\\
\tau(ab)
\end{pmatrix}
=
\frac{1}{\sqrt{2}}
\begin{pmatrix}
1&1\\
1&-1
\end{pmatrix}
\begin{pmatrix}
ab\\
ba
\end{pmatrix}
=
\frac{1}{\sqrt{2}}
\begin{pmatrix}
ab+ba\\
ab-ba
\end{pmatrix}
=
\sqrt2
\begin{pmatrix}
a\cdot b\\
a\wedge b
\end{pmatrix}.
\]
Thus, up to normalization, the orthogonal Clifford decomposition is the $\mathbb{Z}_2$ Fourier decomposition of the exchange orbit. The same exchange decomposition naturally extends from vectors to arbitrary basis blades and, by linearity, to dense multivectors.

For basis blades, every product \(e_xe_y\) determines an exchange orbit
\((e_xe_y,e_ye_x)\), whose relative orientation is encoded by the Clifford
cocycle. The exchange rotor introduced in~\cite{Muchane2026Rotor} is defined
by
$
R=(ba)^{-1}ab
$.
For basis blades \(a=e_x\) and \(b=e_y\), the geometric product is
\[
ab=\chi_{p,q}(x,y)e_{x\oplus y},
\qquad
ba=\chi_{p,q}(y,x)e_{x\oplus y}.
\]
Both orderings produce the same output blade \(e_{x\oplus y}\), differing only in their cocycle values. Since every nonzero basis blade is invertible,
\[
\begin{aligned}
R
&=(ba)^{-1}ab\\
&=\left(\chi_{p,q}(y,x)e_{x\oplus y}\right)^{-1}
   \left(\chi_{p,q}(x,y)e_{x\oplus y}\right)\\
&=\frac{\chi_{p,q}(x,y)}{\chi_{p,q}(y,x)}
   e_{x\oplus y}^{-1}e_{x\oplus y}\\
&=\frac{\chi_{p,q}(x,y)}{\chi_{p,q}(y,x)}.
\end{aligned}
\]
The Clifford cocycle therefore completely determines the multiplication law of the basis blades. By contrast, the exchange rotor measures only the relative effect of reversing the order of multiplication, which is the ratio between the two cocycle values. This distinction becomes clearer by decomposing the cocycle into exchange and metric contributions. The sign in Clifford multiplication has two independent contributions: the parity of transpositions required to reorder basis vectors, and the signature-dependent signs from the contractions $e_i^2=\pm1$. Since these contributions occur independently during multiplication, the cocycle factors as
\[
\chi_{p,q}(x,y)
=
\chi_{\mathrm{ex}}(x,y)
\chi_{\mathrm{met}}(x,y).
\]
\(\chi_{\mathrm{ex}}\) records the signs introduced when restoring canonical blade order, while \(\chi_{\mathrm{met}}\) records the metric signs from repeated generators. Since the metric contribution is symmetric and
$
\chi_{\mathrm{met}}(x,y)
=
\chi_{\mathrm{met}}(y,x),
$
it cancels in the ratio, giving
\[
R
=
\frac{\chi_{\mathrm{ex}}(x,y)}
     {\chi_{\mathrm{ex}}(y,x)}.
\]
The exchange rotor depends only on the exchange (orientation) component of the cocycle. 

By linearity, dense Clifford multiplication is a
superposition of such decorated exchange orbits. For dense multivectors $A=\sum_xa_xe_x$ and $B=\sum_yb_ye_y$,
\[
AB
=
\sum_{x,y}
a_xb_y
\chi_{p,q}(x,y)
e_{x\oplus y}.
\]
Grouping together those having the same output blade \(z=x\oplus y\) gives
\[
AB
=
\sum_zc_ze_z,
\qquad
c_z
=
\sum_{x\oplus y=z}
a_xb_y\chi_{p,q}(x,y)
=
\sum_x
a_xb_{x\oplus z}
\chi_{p,q}(x,x\oplus z).
\]
The coefficient map
$
c_z
=
\sum_{x\oplus y=z}
a_xb_y\chi_{p,q}(x,y)
$
is a cocycle-twisted convolution over
$(\mathbb Z_2)^n$
\cite{AlbuquerqueMajid2002,Lounesto2001}.
For a Clifford algebra \(\mathcal{C}\ell_{p,q}(\mathbb K)\), with
\(\mathbb K\in\{\mathbb R,\mathbb C\}\), one has
\[
\mathcal{C}\ell_{p,q}(\mathbb K)
\cong
\mathbb K_{\chi}\!\left[(\mathbb Z_2)^n\right],
\qquad n=p+q,
\]
where the cocycle \(\chi\) is determined by the quadratic form \(Q\) after choosing an orthogonal basis. Ordinary convolution on the Abelian group \((\mathbb Z_2)^n\) is diagonalized by the Walsh--Hadamard transform, the character table of \((\mathbb Z_2)^n\). Clifford multiplication differs only
by the additional cocycle factor
\[
(f *_\chi g)(z)
=
\sum_x
f(x)g(x\oplus z)\chi(x,x\oplus z).
\]

Passing to complex scalars does not alter the underlying twisted group algebra. 
\[
\mathbb C\otimes_{\mathbb R}\mathcal{C}\ell(V,Q)
\cong
\mathbb C_{\chi}\!\left[(\mathbb Z_2)^n\right],
\]
so the blade-index group, cocycle, and twisted convolution are unchanged;
only the coefficient field is extended. Since quantum amplitudes are already
complex, the quantum circuit, including the cocycle phase oracle, covers complexified Clifford algebra without changing the asymptotic resource bounds. This reformulation is the key to the quantum algorithm. Rather than evaluating all \(4^n\) blade interactions individually, the circuit computes the cocycle-twisted convolution coherently in amplitude space by combining reversible XOR, a diagonal cocycle phase oracle, and a Walsh--Hadamard transform on the summation index.

\subsection{Algorithm}
The input to the algorithm is two amplitude-encoded multivectors, where each computational basis state $\ket{x}$ corresponds to the Clifford basis blade $e_x$. The algorithm is executed by a universal quantum circuit $\mathcal U$ that prepares an amplitude encoding of the geometric product
$AB=\sum_z c_ze_z,$
that is,
\begin{equation}
\mathcal U:
\left(
|A\rangle=\sum_xa_x|x\rangle,\;
|B\rangle=\sum_yb_y|y\rangle
\right)
\longmapsto
|AB\rangle=\sum_zc_z|z\rangle,
\end{equation}
where the coefficients $\{c_z\}$ are those of the geometric product $AB$. Because the output remains amplitude encoded, the multiplication primitive can be applied repeatedly, and chains of geometric products may be evaluated coherently within a larger quantum computation before a final measurement is performed.

\subsubsection{The cocycle oracle}
Having reduced Clifford multiplication to cocycle-twisted convolution, whose group operation is bitwise XOR, the remaining challenge is the implementation of the cocycle. To this end, I design a quantum oracle $U_\chi$ for its phase. For $\Cl_{p,q}$, the Clifford cocycle is encoded by the Boolean phase polynomial
\[
\Phi_{p,q}:(\mathbb Z_2)^n\times(\mathbb Z_2)^n\longrightarrow \mathbb Z_2,
\]
which records the Boolean exponent of the cocycle. With the little-endian blade encoding fixed above,
$
e_xe_y=\chi_{p,q}(x,y)e_{x\oplus y},
$
where
\[
\chi_{p,q}(x,y)=(-1)^{\Phi_{p,q}(x,y)}
\]
and
\[
\Phi_{p,q}(x,y)
=
\sum_{1\le i<j\le n}x_jy_i
+
\sum_{i=p+1}^{n}x_iy_i
\pmod 2.
\]

The phase polynomial separates into two independent contributions. The first term records the parity of the exchanges needed to restore
canonical blade order: a generator $e_j$ from the left factor crosses a generator $e_i$ from the right factor whenever $i<j$. The second term
records the metric contribution from repeated generators in the negative directions, where $e_i^2=-1$. Thus the exchange term is purely combinatorial, depending only on the ordering of basis generators, while the metric term depends only on the signature $(p,q)$ of the defining
quadratic form. 
Since
\[
\Phi_{p,q}
=
\Phi_{\mathrm{ex}}
+
\Phi_{\mathrm{met}}
\pmod2,
\]
the Clifford cocycle factorizes as
$
\chi_{p,q}(x,y)
=
\chi_{\mathrm{ex}}(x,y)
\chi_{\mathrm{met}}(x,y),
$
where
\[
\chi_{\mathrm{ex}}(x,y)
=
(-1)^{\Phi_{\mathrm{ex}}(x,y)},
\qquad
\chi_{\mathrm{met}}(x,y)
=
(-1)^{\Phi_{\mathrm{met}}(x,y)}.
\]
The oracle mirrors this decomposition by first computing the Boolean exponent of the exchange cocycle through a reversible linear transformation, then using phase kickback to implement the map
$
\Phi_{\mathrm{ex}}(x,y)
\longrightarrow
(-1)^{\Phi_{\mathrm{ex}}(x,y)}.
$
An additional diagonal phase layer contributes the metric phase, so that
\[
\chi_{p,q}(x,y)
=
\chi_{\mathrm{ex}}(x,y)\chi_{\mathrm{met}}(x,y)
=
(-1)^{\Phi_{\mathrm{ex}}(x,y)+\Phi_{\mathrm{met}}(x,y)}
=
(-1)^{\Phi_{p,q}(x,y)}.
\]

A direct implementation of the exchange phase would apply a controlled-$Z$ gate for every pair \((i,j)\) with \(i<j\). Since there are \(\binom{n}{2}\) such pairs, this naive oracle uses \(O(n^2)\) two-qubit gates to prepare
$
(-1)^{\sum_{i<j}x_jy_i}.
$
The metric contribution is diagonal and requires only \(q\) controlled-$Z$ gates. Although this already reduces the dependence from the exponential coefficient dimension to the geometric dimension, the exchange phase order can be synthesized substantially more efficiently.

\begin{lemma}[Nilpotent shift representation of the exchange polynomial]
Let $S\in M_n(\mathbb F_2)$ be the lower nilpotent shift matrix,
\[
S=
\begin{pmatrix}
0&0&0&\cdots&0\\
1&0&0&\cdots&0\\
0&1&0&\cdots&0\\
\vdots&&\ddots&\ddots&\vdots\\
0&0&\cdots&1&0
\end{pmatrix}
\qquad
S^n=0.
\]
 Then the exchange contribution to the Boolean phase polynomial satisfies
\[
\sum_{1\le i<j\le n}x_jy_i
=
x^TLy,
\]
where
\[
L=\sum_{k=1}^{n-1}S^k.
\]
\end{lemma}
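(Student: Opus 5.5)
The plan is to compute the entries of $L$ directly and then expand the bilinear form. The first step is the entries of the powers of the shift. By definition $S_{ab}=1$ exactly when $a=b+1$, so $S e_b = e_{b+1}$ for $b<n$ and $S e_n=0$, where $e_b$ denotes the standard basis vector of $\mathbb F_2^n$. By induction on $k$, $S^k e_b = e_{b+k}$ if $b+k\le n$ and $0$ otherwise. Equivalently,
\[
(S^k)_{ab}=\begin{cases}1,& a-b=k,\\ 0,&\text{otherwise.}\end{cases}
\]
In particular this gives $S^n=0$, and also $S^k=0$ for every $k\ge n$.

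Summing over $k=1,\dots,n-1$, the supports of the $S^k$ are the distinct subdiagonals $\{(a,b): a-b=k\}$. These are pairwise disjoint, so no cancellation occurs modulo $2$. Therefore $L_{ab}=1$ exactly when $a-b\in\{1,\dots,n-1\}$, that is, when $a>b$. So $L$ is the strictly lower triangular all-ones matrix over $\mathbb F_2$.

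Finally, expand the bilinear form:
\[
x^TLy=\sum_{a,b}x_aL_{ab}y_b=\sum_{1\le b<a\le n}x_ay_b \pmod 2.
\]
Renaming $a=j$ and $b=i$ gives exactly $\sum_{1\le i<j\le n}x_jy_i$, which is the claim.

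There is no real obstacle here. The one point to watch is the index convention. The exchange term pairs the larger index $j$, taken from the left factor $x$, with the smaller index $i$, taken from the right factor $y$. This forces the \emph{lower} triangular support, which matches the lower shift $S$ rather than $S^T$. It is consistent with the little-endian encoding fixed earlier, where $e_j$ from the left blade must cross $e_i$ from the right blade whenever $i<j$.

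As a sanity check, take $n=3$ with $x=010$ (the blade $e_2$) and $y=101$ (the blade $e_{13}$), the row of Table~\ref{tab:cl3_encoding_products}. Then $x^TLy=x_2y_1+x_3y_1+x_3y_2=1$, which gives the sign $-1$ recorded in the table.
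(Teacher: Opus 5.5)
Your proof is correct and follows the paper's route: you identify $S^k$ as the $k$-th subdiagonal, conclude that $L$ is the strictly lower-triangular all-ones matrix, and expand $x^TLy$ to obtain $\sum_{1\le i<j\le n}x_jy_i$. Your remark that the subdiagonals have disjoint supports, so no cancellation occurs modulo $2$, makes explicit a step the paper leaves implicit; the paper's proof also records $\mathbf T=(I+S)^{-1}=I+L$, which it uses later but which this statement does not need.
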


\begin{proof}
For each $k\ge1$, the matrix $S^k$ has ones on the $k$-th subdiagonal.
\[
L=\sum_{k=1}^{n-1}S^k
=
\begin{pmatrix}
0&0&0&\cdots&0\\
1&0&0&\cdots&0\\
1&1&0&\cdots&0\\
\vdots&&\ddots&\ddots&\vdots\\
1&1&\cdots&1&0
\end{pmatrix},
\]
so
\[
x^TLy
=
\sum_{1\le i<j\le n}x_jy_i.
\]
The bilinear form \(x^TLy\) therefore computes the exchange contribution \(\Phi_{\mathrm{ex}}(x,y)\) to the Boolean phase polynomial. Now define
$
\mathbf T:=(I+S)^{-1}.
$
Since
$
(I+S)(I+S+\cdots+S^{n-1})=I,
$
we have
\[
\mathbf T
=
I+S+\cdots+S^{n-1}
=
I+L.
\]
Unlike an infinite Neumann series, this resolvent expansion terminates exactly because $S$ is nilpotent. Moreover, since subtraction equals addition in $\mathbb F_2$,
\[
L=\mathbf T-I=\mathbf T+I.
\]
\end{proof}

\begin{remark}
$\mathbf T$ has three equivalent interpretations:

(i) \textit{the finite resolvent of the nilpotent shift operator},

(ii) \textit{a lower-triangular Toeplitz matrix over $\mathbb F_2$}, and

(iii) \textit{the binary prefix-sum transform}.
\end{remark}

Viewed as the binary prefix-sum transform, $\mathbf T$ is an accumulation operator. The shift $S$ advances one position through the ordered blade indices, while the finite series $I+S+S^2+\cdots+S^{n-1}$ accumulates the exchange contributions required to
restore canonical blade order. The resolvent form of $\mathbf T$ admits two complementary factorizations. The first is an elementary factorization into nearest-neighbor transvections. The second is a dyadic factorization into powers of the nilpotent shift.

\begin{lemma}[Elementary sweep factorization]
Let $e_{i+1,i}\in M_n(\mathbb F_2)$ denote the matrix with a single nonzero
entry in position $(i+1,i)$, and define
\[
E_{i+1,i}:=I+e_{i+1,i}.
\]
Then
\[
\mathbf T
=
E_{n,n-1}E_{n-1,n-2}\cdots E_{2,1}.
\]
\end{lemma}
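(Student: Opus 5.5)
We need to show that $\mathbf T=I+L$, the lower-triangular all-ones matrix over $\mathbb F_2$ from the previous lemma, equals the ordered product $E_{n,n-1}E_{n-1,n-2}\cdots E_{2,1}$. The approach is to compute the action of the right-hand side on an arbitrary column vector $v\in\mathbb F_2^n$ and compare it with the prefix-sum description $(\mathbf T v)_j=\sum_{i\le j}v_i$. This description follows immediately from $\mathbf T=I+L$ having ones exactly on and below the diagonal.

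First, each factor acts as an elementary row operation. $E_{i+1,i}v$ agrees with $v$ except in coordinate $i+1$, which becomes $v_{i+1}+v_i$; this is the nearest-neighbor transvection. In the product, the rightmost factor $E_{2,1}$ acts first, so the factors are applied in the order $E_{2,1},E_{3,2},\dots,E_{n,n-1}$. This sweep runs from low index to high index.

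Second, we prove by induction on $k$ that after applying $E_{2,1},\dots,E_{k,k-1}$, the vector $w^{(k)}$ satisfies
\[
w^{(k)}_j=\sum_{i\le j}v_i \quad (j\le k),\qquad w^{(k)}_j=v_j\quad (j>k).
\]
The base case $k=1$ is just $w^{(1)}=v$. For the step, applying $E_{k+1,k}$ changes only coordinate $k+1$. That coordinate becomes $v_{k+1}+w^{(k)}_k=v_{k+1}+\sum_{i\le k}v_i=\sum_{i\le k+1}v_i$. The key point is that coordinate $k$ has already been updated to its full prefix sum before it is used. At $k=n$ we obtain exactly $\mathbf T v$ for every $v$, so the two matrices are equal.

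The only real subtlety is the ordering convention. With the factors in the stated order, the updated value propagates upward through the sweep. In the reverse order, $E_{2,1}\cdots E_{n,n-1}$, each coordinate would pick up only its immediate predecessor, giving $I+S$ rather than $\mathbf T$. So the main care goes into tracking which factor acts first.

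As an alternative check, one could expand the product formally: $e_{i+1,i}e_{j+1,j}$ is nonzero only when $j+1=i$. In the given order, the surviving products of consecutive $e$'s generate every entry $e_{j,i}$ with $i<j$ exactly once, reproducing $I+L$. The induction argument avoids this bookkeeping, so it is the one we use.
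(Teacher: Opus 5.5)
Your proof is correct and takes the same approach as the paper: both read the product right to left as the sweep $E_{2,1},E_{3,2},\dots,E_{n,n-1}$ of nearest-neighbor transvections and identify the result as the prefix-sum matrix $\mathbf T$. Your induction on $k$ spells out the step the paper simply asserts, namely that after the sweep the $j$-th coordinate equals $y_1+\cdots+y_j$.
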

\begin{proof}
The elementary matrix $E_{i+1,i}$ acts on column vectors by adding coordinate
$i$ into coordinate $i+1$,
\[
y_{i+1}\longmapsto y_{i+1}+y_i,
\]
leaving all other coordinates fixed. Since the product acts on column
vectors from right to left, the factorization applies the sweep
\[
1\to2,\quad 2\to3,\quad \ldots,\quad n-1\to n.
\]
After this sweep, the $j$-th output coordinate is
$
y_1+y_2+\cdots+y_j.
$
Therefore the product has entries
\[
\left(E_{n,n-1}\cdots E_{2,1}\right)_{ji}
=
\begin{cases}
1,& i\le j,\\
0,& i>j,
\end{cases}
\]
which is exactly the unit lower-triangular matrix $\mathbf T$.
\end{proof}

\begin{lemma}[Dyadic factorization]
Let
$
h=\lceil \log_2 n\rceil.
$
Then
\[
\mathbf T
=
\prod_{m=0}^{h-1}
\left(I+S^{2^m}\right).
\]
\end{lemma}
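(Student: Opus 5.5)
The plan is to work in the commutative ring $\mathbb F_2[S]$ and use the identity $\prod_{m=0}^{h-1}(1+t^{2^m})=\sum_{k=0}^{2^h-1}t^k$ for an indeterminate $t$. Every integer $k$ with $0\le k\le 2^h-1$ has exactly one binary expansion $k=\sum_{m\in M}2^m$ with $M\subseteq\{0,\dots,h-1\}$. So when the product is expanded, each power $t^k$ appears exactly once and with coefficient $1$. No cancellation modulo $2$ can occur, because distinct subsets $M$ give distinct exponents. This identity holds over $\mathbb Z$, and hence also over $\mathbb F_2$.

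Next, substitute $t=S$. All the factors $I+S^{2^m}$ are polynomials in $S$, so they commute, and the order of the factors in the product does not matter. The substitution gives
\[
\prod_{m=0}^{h-1}\left(I+S^{2^m}\right)=\sum_{k=0}^{2^h-1}S^k .
\]
Since $h=\lceil\log_2 n\rceil$, we have $2^h\ge n$, so $2^h-1\ge n-1$. Nilpotency $S^n=0$ gives $S^k=0$ for every $k\ge n$. The sum therefore truncates to $I+S+\cdots+S^{n-1}$, which is $\mathbf T$ by the Nilpotent shift representation lemma, where $\mathbf T=I+L=(I+S)^{-1}$.

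As an independent check of the truncation step, we can use the telescoping identity $(I+S)\prod_{m=0}^{h-1}(I+S^{2^m})=I+S^{2^h}$ over $\mathbb F_2$. This follows from repeatedly applying $(I+S^{2^m})^2=I+S^{2^{m+1}}$ in characteristic $2$. Since $2^h\ge n$, the right-hand side equals $I$, so the product is the inverse of $I+S$, which is $\mathbf T$. This route does not need the binary-expansion bookkeeping.

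The only real subtlety is the boundary case $n=1$. There $h=0$, the product is empty and equals $I$, and $\mathbf T=I$ also holds because $S=0$. Beyond that, the argument is just the careful use of $2^h\ge n$, and no step looks like a genuine obstacle.
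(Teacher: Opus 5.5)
Your proof is correct and follows essentially the same route as the paper: you establish $\prod_{m=0}^{h-1}(I+S^{2^m})=\sum_{k=0}^{2^h-1}S^k$ and then truncate using $S^n=0$ and $2^h\ge n$, with your binary-expansion uniqueness argument standing in for the paper's doubling induction $P_{r+1}=P_r(I+S^{2^r})$. Your telescoping check $(I+S)\prod_{m=0}^{h-1}(I+S^{2^m})=I+S^{2^h}=I$, via the characteristic-$2$ identity $(I+S^{2^m})^2=I+S^{2^{m+1}}$, is an independent confirmation that lands directly on the definition $\mathbf T=(I+S)^{-1}$, and your treatment of $n=1$ covers a boundary case the paper leaves implicit.
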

\begin{proof}
First suppose \(n=2^r\). Define
$
P_r:=\prod_{m=0}^{r-1}\left(I+S^{2^m}\right).
$
We prove by induction that
$
P_r=I+S+S^2+\cdots+S^{2^r-1}.
$
For \(r=1\), this is immediate. If the identity holds for \(r\), then
\[
\begin{aligned}
P_{r+1}
&=P_r\left(I+S^{2^r}\right)\\
&=\left(\sum_{k=0}^{2^r-1}S^k\right)
+
\left(\sum_{k=0}^{2^r-1}S^{k+2^r}\right)\\
&=\sum_{k=0}^{2^{r+1}-1}S^k.
\end{aligned}
\]
For general \(n\), take \(h=\lceil\log_2 n\rceil\). Since \(S^n=0\), all
terms \(S^k\) with \(k\ge n\) vanish, so the identity truncates to
\[
\prod_{m=0}^{h-1}
\left(I+S^{2^m}\right)
=
\sum_{k=0}^{n-1}S^k
=
\mathbf T.
\]
\end{proof}

\begin{proposition}[Optimized Clifford cocycle oracle]
The Clifford cocycle phase oracle
\[
U_\chi : \ket{x}\ket{y}
\longmapsto
(-1)^{\Phi_{p,q}(x,y)}\ket{x}\ket{y}
\]
can be synthesized in two ways. The elementary sweep uses $O(n)$ two-qubit gates and depth $O(n)$. The dyadic method uses $O(n\log n)$ two-qubit gates and depth $O(\log n)$.
\end{proposition}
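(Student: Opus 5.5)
Since $L=\mathbf T+I$ by the nilpotent shift lemma, the exchange phase splits as
\[
\Phi_{\mathrm{ex}}(x,y)=x^TLy=x^T(\mathbf Ty)+x^Ty \pmod 2 .
\]
The oracle will therefore have three parts. First, apply the reversible linear map $y\mapsto \mathbf Ty$ in place on the $y$-register, using CNOT gates. Second, apply a layer of $n$ transversal controlled-$Z$ gates between $x_i$ and the transformed $y_i$; this produces $(-1)^{x^T\mathbf Ty}$. Third, uncompute with $\mathbf T^{-1}=I+S$. The leftover term $x^Ty=\sum_i x_iy_i$ is handled by a second transversal layer of $n$ controlled-$Z$ gates on the original registers. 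On the diagonal indices $i>p$ this layer cancels against the metric layer $\sum_{i=p+1}^n x_iy_i$, so in fact only the $p$ indices $i\le p$ need a controlled-$Z$. Hence the full phase is
\[
\Phi_{p,q}=x^T\mathbf Ty+\sum_{i\le p}x_iy_i,
\]
and the cost reduces to implementing $\mathbf T$ and its inverse, plus $O(n)$ commuting diagonal gates of depth $O(1)$. Correctness follows because CNOT circuits implementing an invertible $M\in GL_n(\mathbb F_2)$ act as permutations $\ket{y}\mapsto\ket{My}$. The diagonal phase $(-1)^{x\cdot y'}$ with $y'=\mathbf Ty$ is kicked onto the computational basis state, and uncomputing restores $\ket{y}$.

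For the elementary sweep, I will use the sweep factorization lemma. Each $E_{i+1,i}$ is a single CNOT with control $y_i$ and target $y_{i+1}$, applied in the order $E_{2,1}$ first. That gives $n-1$ CNOTs, and the same number again for the inverse, since $(E_{i+1,i})^{-1}=E_{i+1,i}$ and the sequence is applied in reverse. The CNOTs form a chain in which each one depends on the previous target. The depth is therefore $O(n)$, and the total gate count is $O(n)$.

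For the dyadic method, I will use the dyadic factorization lemma. The factor $I+S^{2^m}$ adds $y_{i}$ into $y_{i+2^m}$ for all $i\le n-2^m$ simultaneously, read as an operator on the original values. The main obstacle is that this is not automatically a parallel CNOT layer, because a qubit can be both a control and a target. I would resolve this as follows. When $y_{i+2^m}$ is both a target (from $i$) and a control (for $i+2^{m+1}$), processing the targets in decreasing index order implements the map correctly, but that is sequential. Instead, I would split the pairs $(i,i+2^m)$ into two rounds according to the parity of $\lfloor (i-1)/2^m\rfloor$. Within one round, no qubit is both a control and a target. Moreover, the first round updates only qubits that are not used as controls in the second round, so the composition equals $I+S^{2^m}$. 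Each factor is then $O(1)$ depth and $O(n)$ CNOTs.

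If that ordering argument fails, the fallback is an ancilla copy. Fan out $y$ into fresh ancillas with $n$ parallel CNOTs, add the copies into the shifted positions in one layer, and uncompute the copy afterwards. This still costs only $O(1)$ depth and $O(n)$ gates per factor. Summing over $h=\lceil\log_2 n\rceil$ factors, and the same again for the inverse, gives $O(n\log n)$ two-qubit gates and depth $O(\log n)$. The diagonal layers add $O(n)$ gates at depth $O(1)$, which completes both bounds.
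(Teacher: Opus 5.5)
Your reduction $\Phi_{p,q}=x^T\mathbf Ty+\sum_{i\le p}x_iy_i$, the compute--$\mathrm{CZ}$--uncompute structure, and the sweep count $(n-1)+(n-1)+n+p$ are exactly the paper's argument.

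\textbf{The gap: the dyadic step.} It lies in your claim that each factor $I+S^{2^m}$ can be applied in place in depth $O(1)$. Your two-round split fails in either order. The round with $\lfloor (i-1)/2^m\rfloor$ even has its targets in the odd blocks, and these are exactly the controls of the other round; the same holds with the rounds swapped. So whichever round runs first corrupts the controls of the second. Already for $n=4$, $m=0$ the rounds are $\{(1,2),(3,4)\}$ and $\{(2,3)\}$. Running $\{(1,2),(3,4)\}$ first gives third coordinate $y_1+y_2+y_3$. Running it second gives fourth coordinate $y_2+y_3+y_4$. Neither is $(I+S)y$.

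No better schedule can fix this. Since $(I+S)^{-1}=\mathbf T$, reversing any clean circuit of two-qubit gates for $I+S$ (ancillas returned to $\ket{0}$) of depth $d$ gives a clean depth-$d$ circuit for $\mathbf T$. But $(\mathbf Ty)_n=y_1+\cdots+y_n$ depends on all $n$ inputs. In depth $d$, one output bit can depend on at most $2^d$ input bits (its backward light cone), so $d\ge\log_2 n$.

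The same obstruction defeats your fallback as written. After the shifted addition, undoing the fan-out leaves the original $y_{i-2^m}$ in ancilla $i$. Clearing it in $O(1)$ depth would yield a clean constant-depth circuit for $I+S^{2^m}$, which is impossible. Applying the factors one after another, each cleanly, therefore costs depth of order $\log^2 n$, not $\log n$.

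\textbf{Two repairs.} The bound itself is true, and either of the following gives it.
\begin{enumerate}
\item \emph{Keep the garbage.} Give each stage a fresh $n$-qubit ancilla block. Copy the current $y$ into it, then add the shifted copy back; this is two layers of disjoint CNOTs per stage. Apply $\mathrm{CZ}_{\mathbf T}$ after the last stage, and only then run the whole stage sequence backwards, which restores $y$ and clears every block. This gives $O(n\log n)$ gates and $O(\log n)$ depth. It costs $n\lceil\log_2 n\rceil$ ancillas, more than the $O(n)$ qubits in the paper's resource table, though the proposition itself does not bound qubits.
\item \emph{Compute $\mathbf T$ in place with a Brent--Kung up-sweep/down-sweep.} For $d=0,\dots,h-1$, apply CNOT $y_{j-2^d}\to y_j$ for every multiple $j\le n$ of $2^{d+1}$. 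Then for $d=h-2,\dots,0$, apply CNOT $y_j\to y_{j+2^d}$ for every multiple $j$ of $2^{d+1}$ with $j+2^d\le n$. Each level is a layer of disjoint CNOTs. The result is $U_{\mathbf T}$ with fewer than $2n$ CNOTs, depth at most $2h$, and no ancillas. This is not a factor-by-factor realization of the dyadic product.
\end{enumerate}

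The paper only asserts the dyadic bound ``with a parallel-prefix implementation'' and never addresses the control/target conflict. You were right to flag it, but your resolution is incorrect.
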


\begin{proof}
Both constructions follow from factorizations of the nilpotent resolvent
\(
\mathbf T=(I+S)^{-1}.
\)
Since $L=\mathbf T+I$ over $\mathbb F_2$, we have
$
x^TLy=x^T\mathbf Ty+x^Ty.
$
Therefore
\[
\Phi_{p,q}(x,y)
=
x^T\mathbf Ty
+
x^Ty
+
\sum_{i=p+1}^{n}x_iy_i.
\]
The last two terms combine over $\mathbb F_2$,
\[
x^Ty+\sum_{i=p+1}^{n}x_iy_i
=
\sum_{i=1}^{p}x_iy_i.
\]
Thus the nontrivial exchange contribution is implemented through
\[
U_{\mathbf T}:\ket{y}\longmapsto\ket{\mathbf Ty}.
\]
Since \(\mathbf T\in GL_n(\mathbb F_2)\), the map \(y\mapsto \mathbf T y\) is a bijection of \((\mathbb Z_2)^n\), and \(U_{\mathbf T}\ket{y}=\ket{\mathbf T y}\) is a unitary permutation of the computational basis.  A parallel controlled-$Z$ layer between corresponding qubits of $\ket{x}$ and $\ket{\mathbf Ty}$ kicks back the phase
$
(-1)^{x^T\mathbf Ty},
$
since the $i$-th controlled-$Z$ gate contributes
$(-1)^{x_i(\mathbf Ty)_i}$, and the phases multiply to
\[
(-1)^{\sum_i x_i(\mathbf Ty)_i}
=
(-1)^{x^T\mathbf Ty}.
\]
Applying $U_{\mathbf T}^{\dagger}$ restores the second register to
$\ket{y}$ while preserving this phase. A final controlled-$Z$ layer over
the positive directions contributes
$(-1)^{\sum_{i=1}^{p}x_iy_i}$. Using \(L=\mathbf T+I\) and the fact that addition and subtraction coincide
over \(\mathbb F_2\), we may rewrite
\[
x^T\mathbf Ty
=
x^T(\mathbf T+I)y+x^Ty.
\]
Hence the total phase is
\[
\begin{aligned}
(-1)^{x^T\mathbf Ty+\sum_{i=1}^{p}x_iy_i}
&=
(-1)^{
x^T(\mathbf T+I)y
+
x^Ty
+
\sum_{i=1}^{p}x_iy_i
}\\
&=
(-1)^{
x^T(\mathbf T+I)y
+
\sum_{i=p+1}^{n}x_iy_i
}\\
&=
(-1)^{\Phi_{p,q}(x,y)}.
\end{aligned}
\]
\end{proof}

\begin{table}[t]
\centering
\begin{tabular}{lccc}
\hline
Implementation & Gates & Qubits & Depth \\
\hline
Naive & $O(\log^2 N)$ & $O(\log N)$ & $O(\log^2 N)$ \\
Sweep & $O(\log N)$ & $O(\log N)$ & $O(\log N)$ \\
Dyadic prefix & $O(\log N\log\log N)$ & $O(\log N)$ & $O(\log\log N)$ \\
\hline
\end{tabular}
\caption{Circuit resources for three implementations of the Clifford cocycle phase oracle \(U_\chi\), where \(N=2^n\) is the number of Clifford basis blades. Gate count measures elementary Clifford gates, qubit count measures the two input registers and any working space, and depth assumes parallel execution of disjoint gates.}
\label{tab:circuit_resources}
\end{table}

\begin{figure}[H]
\centering
\begin{quantikz}[row sep=0.35cm,column sep=0.55cm]
\lstick{$\ket{x}$}
    & \qw
    & \gate[wires=2]{\mathrm{CZ}_{\mathbf T}}
    & \qw
    & \gate[wires=2]{\mathrm{CZ}_{+}}
    & \qw
\\
\lstick{$\ket{y}$}
    & \gate{U_{\mathbf T}}
    & \qw
    & \gate{U_{\mathbf T}^{\dagger}}
    & \qw
    & \qw
\end{quantikz}
\caption{Optimized Clifford cocycle oracle $U_\chi$. The transform
$U_{\mathbf T}$ maps $\ket{y}$ to $\ket{\mathbf T y}$. The layer
$\mathrm{CZ}_{\mathbf T}$ produces the phase $(-1)^{x^T\mathbf T y}$. The inverse transform uncomputes the second register, and $\mathrm{CZ}_{+}$ contributes the remaining diagonal phase over the positive directions.}
\label{fig:cocycle_oracle}
\end{figure}

\begin{figure}[H]
\centering
\begin{quantikz}[row sep=0.22cm,column sep=0.34cm]
\lstick{$|y_1\rangle$} & \ctrl{1} & \qw      & \qw      & \qw      & \qw \\
\lstick{$|y_2\rangle$} & \targ{}  & \ctrl{1} & \qw      & \qw      & \qw \\
\lstick{$|y_3\rangle$} & \qw      & \targ{}  & \ctrl{1} & \qw      & \qw \\
\lstick{$|y_4\rangle$} & \qw      & \qw      & \targ{}  & \ctrl{1} & \qw \\
\lstick{$|y_5\rangle$} & \qw      & \qw      & \qw      & \targ{}  & \qw
\end{quantikz}
\caption{Elementary sweep variant of \(U_{\mathbf T}\) for \(n=5\).
Each CNOT implements the transvection \(E_{i+1,i}\), propagating the binary
prefix sum through the register. The circuit maps
\(\ket{y}\mapsto\ket{\mathbf T y}\) with \(n-1\) CNOT gates and depth
\(n-1\).}
\label{fig:sweep_prefix}
\end{figure}

\begin{figure}[H]
\centering
\begin{quantikz}[row sep=0.22cm,column sep=0.34cm]
\lstick{$|y_1\rangle$} & \ctrl{1} & \ctrl{2} & \ctrl{4} & \qw \\
\lstick{$|y_2\rangle$} & \targ{}  & \ctrl{2} & \qw      & \qw \\
\lstick{$|y_3\rangle$} & \ctrl{1} & \targ{}  & \qw      & \qw \\
\lstick{$|y_4\rangle$} & \targ{}  & \qw      & \qw      & \qw \\
\lstick{$|y_5\rangle$} & \ctrl{1} & \ctrl{2} & \targ{}  & \qw \\
\lstick{$|y_6\rangle$} & \targ{}  & \ctrl{2} & \qw      & \qw \\
\lstick{$|y_7\rangle$} & \ctrl{1} & \targ{}  & \qw      & \qw \\
\lstick{$|y_8\rangle$} & \targ{}  & \qw      & \qw      & \qw
\end{quantikz}
\caption{Schematic dyadic prefix construction for \(U_{\mathbf T}\). The \(m\)-th stage propagates partial prefix sums across distance \(2^m\).
With a parallel-prefix implementation, the propagation distance doubles at
each stage, giving depth \(O(\log n)\) and \(O(n\log n)\) CNOT gates.}
\label{fig:dyadic_prefix}
\end{figure}

For the sweep, \(U_{\mathbf T}\) uses \(n-1\) CNOT gates, and
\(U_{\mathbf T}^{\dagger}\) uses another \(n-1\). The central
\(\mathrm{CZ}_{\mathbf T}\) layer consists of \(n\) controlled-\(Z\) gates,
and the final positive-direction diagonal layer consists of \(p\)
controlled-\(Z\) gates. Hence the sweep oracle uses
\[
(n-1)+(n-1)+n+p=3n-2+p
\]
two-qubit gates. Since \(0\le p\le n\), this is between \(3n-2\) and \(4n-2\), so both its gate count and depth are \(O(n)\). The dyadic variant requires
$
O(n\log n)
$
two-qubit gates and depth \(O(\log n)\). The sweep minimizes gate count, while the dyadic method minimizes circuit depth. Since the geometric dimension satisfies \(n=\log_2 N\), where \(N\) is the number of Clifford basis blades, the dyadic oracle has sublogarithmic depth
$
O(\log n)=O(\log\log N),
$
and gate complexity
$
O(n\log n)=O(\log N\,\log\log N).
$

\subsubsection{Procedure.}
The following section combines the oracle with
reversible XOR and the Walsh--Hadamard transform to complete the algorithm for Clifford multiplication.

\begin{theorem}

Let $A,B\in\mathcal C\ell_{p,q}(\mathbb K)$, with coefficient states $|A\rangle=\sum_xa_x|x\rangle$ and $|B\rangle=\sum_yb_y|y\rangle$. There exists an $O(n)$ circuit whose trivial-character branch prepares the normalized state proportional to $\sum_zc_z|z\rangle$. The success probability of this branch is
$
p_0=2^{-n}\sum_z|c_z|^2.
$
\label{qcm}
\end{theorem}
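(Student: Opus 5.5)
Starting from the product state $|A\rangle|B\rangle=\sum_{x,y}a_xb_y|x\rangle|y\rangle$, I apply three layers: the cocycle phase oracle $U_\chi$ of the Optimized Clifford cocycle oracle proposition, a reversible XOR, and a Walsh--Hadamard transform on the summation index.

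First, $U_\chi$ gives $\sum_{x,y}a_xb_y\chi_{p,q}(x,y)|x\rangle|y\rangle$, using $O(n)$ gates with the sweep factorization. Second, a transversal layer of $n$ CNOTs with controls on the first register and targets on the second maps $|x\rangle|y\rangle\mapsto|x\rangle|x\oplus y\rangle$. Relabelling $z=x\oplus y$, the state becomes
\[
\sum_{x,z}a_xb_{x\oplus z}\chi_{p,q}(x,x\oplus z)|x\rangle|z\rangle ,
\]
so for each fixed $z$ the amplitudes along $x$ are exactly the summands of the twisted convolution $c_z$. Third, I apply $H^{\otimes n}$ to the first register only. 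The amplitude of $|0\rangle|z\rangle$ is
\[
2^{-n/2}\sum_x a_xb_{x\oplus z}\chi_{p,q}(x,x\oplus z)=2^{-n/2}c_z .
\]
Hence the trivial-character branch, where the first register is measured in $|0^n\rangle$, is $2^{-n/2}\sum_z c_z|z\rangle$. Its probability is $2^{-n}\sum_z|c_z|^2$, and conditioned on this outcome the second register holds the normalized state proportional to $\sum_zc_z|z\rangle$.

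For the resource count, the oracle costs $3n-2+p$ two-qubit gates, the XOR costs $n$ CNOTs in depth $1$, and the Walsh--Hadamard layer costs $n$ single-qubit gates in depth $1$. The total is $O(n)$ gates on $2n$ qubits; with the dyadic variant the depth drops to $O(\log n)$. Nothing here depends on $a,b$, so the circuit is universal. Over $\mathbb K=\mathbb C$ nothing changes, since the amplitudes are already complex.

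The delicate step is the bookkeeping after the CNOT relabelling. I must check that the oracle acts on $(x,y)$ before the XOR, so that the phase is $\chi(x,y)$ with $y=x\oplus z$ rather than $\chi(x,z)$. I must also check that the Hadamard must act on the $x$-register, not the $z$-register. Projecting onto the $0$ character sums over $x$ uniformly, which is exactly the convolution sum; this is the main point to verify. Beyond that, I only need to remark that $p_0$ is not amplified here, so the statement concerns the postselected branch; amplitude amplification could be mentioned but is not needed.
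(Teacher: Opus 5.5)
Your proposal is correct and follows the paper's proof essentially step for step. You apply the cocycle oracle $U_\chi$ on $(x,y)$, then the reversible XOR $|x\rangle|y\rangle\mapsto|x\rangle|x\oplus y\rangle$, then $H^{\otimes n}$ on the first register with postselection on $|0^n\rangle$; this yields the branch $2^{-n/2}\sum_z c_z|z\rangle$, the stated $p_0$, and an $O(n)$ gate count from the sweep oracle, with your depth-one transversal CNOT layer and the $3n-2+p$ count just making the paper's resource tally explicit.
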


\begin{figure}[H]
\centering
\begin{quantikz}
\lstick{$|A\rangle$}
& \gate[2]{U_\chi}
& \gate[2]{U_{\oplus}}
& \gate{H^{\otimes n}}
& \meter{}
& \rstick{postselect $|0^n\rangle$}
\\
\lstick{$|B\rangle$}
& \qw
& \qw
& \qw
& \qw
& \rstick{$|AB\rangle$}
\end{quantikz}
\caption{Quantum Clifford multiplication circuit.}
\end{figure}

\begin{proof}
Starting from the tensor-product state
$
|A\rangle|B\rangle
=
\sum_{x,y}
a_xb_y
|x\rangle\otimes|y\rangle,
$
apply the cocycle phase oracle to the two registers
\[
U_{\chi}:
|x\rangle|y\rangle
\longmapsto
(-1)^{\Phi_{p,q}(x,y)}
|x\rangle|y\rangle.
\]
The oracle depends only on the Clifford multiplication law and is independent of the coefficients of \(A\) and \(B\). Next apply the reversible XOR map
\[
U_\oplus:
|x\rangle|y\rangle
\longmapsto
|x\rangle|x\oplus y\rangle,
\]
which computes the underlying group operation on blade indices. Writing
\(z=x\oplus y\), equivalently \(y=x\oplus z\), the state becomes,
\[
|\psi\rangle
=
\sum_{x,z}
a_x
b_{x\oplus z}
\chi_{p,q}(x,x\oplus z)
|x\rangle|z\rangle.
\]
Now apply the Walsh-Hadamard transform \(H^{\otimes n}\) to the first register (the \(A\)-register).
\[
H^{\otimes n}|x\rangle
=
2^{-n/2}
\sum_{w\in(\mathbb Z_2)^n}
(-1)^{w\cdot x}|w\rangle,
\]
we obtain
\[
|\psi\rangle
=
2^{-n/2}
\sum_{w,z}
\left(
\sum_x
(-1)^{w\cdot x}
a_x
b_{x\oplus z}
\chi_{p,q}(x,x\oplus z)
\right)
|w\rangle|z\rangle.
\]
The trivial-character branch \(w=0^n\) is therefore
\[
2^{-n/2}
|0^n\rangle
\sum_z
\left(
\sum_x
a_x
b_{x\oplus z}
\chi_{p,q}(x,x\oplus z)
\right)
|z\rangle
=
2^{-n/2}
|0^n\rangle
\sum_z
c_z
|z\rangle,
\]
where the inner sum is exactly the \(z\)-th coefficient of the Clifford product \(AB\).
Conditioned on this outcome, the second register is the normalized amplitude encoding of the geometric product \(AB\). The success
probability is the squared norm of the trivial-character branch,
\[
p_0
=
\left\|
2^{-n/2}
\sum_z
c_z
|z\rangle
\right\|^2
=
2^{-n}
\sum_z
|c_z|^2.
\]

The reversible XOR uses \(O(n)\) CNOT gates, and the Walsh--Hadamard transform uses \(n\) Hadamard gates. Thus the circuit has size \[ O(n)+\operatorname{cost}(U_\chi). \] Using the linear-size sweep cocycle oracle gives an \(O(n)\)-gate circuit; using the dyadic oracle gives \(O(n\log n)\). 
\end{proof}

\begin{remark}[Twisted group algebra form]
The proof of Theorem~\ref{qcm} depends only on four primitives:
\begin{enumerate}
    \item [(i)] a basis $\{e_g:g\in \mathcal{G} \}$ indexed by a finite group,
    \item [(ii)] an efficient reversible implementation of the group operation,
    \item [(iii)] a twisted multiplication $
e_ge_h=\chi_\mathcal{G}(g,h)e_{gh},
$
    \item [(iv)] a quantum Fourier transform for $\mathcal{G}$, enabling postselection onto the trivial representation.
\end{enumerate}
The proof is therefore independent of the specific Clifford algebra and should apply verbatim to any efficiently presented twisted group algebra
$
\mathbb C_\chi[\mathcal{G}].
$
Let $\chi=\chi_\mathcal{G}$ be a normalized $U(1)$-valued 2-cocycle on $\mathcal{G}$. For
$
A=\sum_ga_ge_g, B=\sum_hb_he_h,
$
the product coefficients are
$
c_k=\sum_{g\in \mathcal{G}}a_gb_{g^{-1}k}\chi_\mathcal{G}(g,g^{-1}k).
$
After applying the cocycle phase, the reversible multiplication map
$
|g\rangle|h\rangle\mapsto |g\rangle|gh\rangle,
$
and the quantum Fourier transform on the first register, postselection on the trivial representation prepares
$
|\mathcal{G}|^{-1/2}\sum_kc_k|k\rangle.
$
The Clifford case is $\mathcal{G}=(\mathbb Z_2)^n$, where $gh=g\oplus h$ and the Fourier transform is $H^{\otimes n}$. In general, the cost per attempt is driven by the reversible group multiplication, the cocycle phase oracle, and the quantum Fourier transform over $\mathcal{G}$, with any additional overhead determined by the success probability of the trivial-representation branch.
\end{remark}

This method differs from recent quantum algorithms for group convolution \cite{Castelazo2021QuantumConvolution}, which usually treat convolution as a linear operator determined by a fixed filter. Here the circuit carries out the bilinear multiplication of a twisted group algebra. The cocycle is incorporated explicitly through the diagonal phase oracle \(U_{\chi}\), while the Walsh--Hadamard transform performs the harmonic summation over the blade-index group.

\begin{conjecture}[Cocycle degree and Clifford hierarchy] Let \(\mathcal G\) be a finite abelian group, and let \[ \chi_{\mathcal G}(g,h)=\omega^{\Phi(g,h)} \] be a phase cocycle whose exponent $\Phi:\mathcal G\times\mathcal G\to \mathbb Z_m $ has a polynomial representation of degree \(d\) in finite abelian coordinates for \(g\) and \(h\). Then the cocycle phase oracle \[ U_{\chi,\mathcal G}:\ket{g}\ket{h}\longmapsto \chi_{\mathcal G}(g,h)\ket{g}\ket{h} \] can be synthesized using gates from the \(d\)th level of the Clifford hierarchy. 
\end{conjecture}

The cocycle exponent \(\Phi_{p,q}(x,y)\) from Clifford multiplication is quadratic in the blade coordinates, and explains why the phase oracle lives entirely within \(\mathcal C_2\). This is the motivating example for the conjecture, which asserts that the algebraic degree of the cocycle exponent influences the Clifford-hierarchy level needed to implement the associated diagonal phase oracle. If true, the conjecture would provide a systematic design principle for quantum phase oracles of twisted group algebras, thus offering a guide for the construction of efficient multiplication circuits and symmetry-based quantum algorithms beyond the Clifford case. A possible approach is to decompose the cocycle exponent into coordinate monomials and synthesize the diagonal phase gates individually. If every degree-\(d\) monomial phase belongs to the \(d\)th
level of the Clifford hierarchy, then the conjecture follows from the
commutativity of diagonal phase operators.

\subsection{Analysis}

The complexity results established above and below concern the quantum model of computation. As is the case with many quantum linear algebra algorithms, recovering all $N=2^n$ output coefficients classically requires $\Omega(N)$ measurements independent of the circuit complexity, an overhead not specific to Clifford multiplication but reflecting the general cost of extracting exponentially many classical values from a quantum state. The algorithm is best for computations in which the inputs are already available as quantum states or the output is consumed by subsequent quantum operations, expectation-value estimation, or measurement. 
 
State preparation presents a complementary challenge, as preparing an arbitrary amplitude-encoded state generally requires \(\Omega(N)\)
resources in the worst case. The relevant question is not whether arbitrary multivectors can be prepared efficiently, but whether Clifford multivectors can be classified in such a way that they have efficient quantum representations. Several important families already do. Basis blades require only computational-basis preparation, the uniform
multivector is obtained by applying \(H^{\otimes n}\) to \(\ket{0^n}\), stabilizer states prepared by Clifford circuits define structured amplitude-encoded multivectors in the blade basis, and tensor-product states give tensor-product multivectors. Identifying the broader class of efficiently preparable Clifford multivectors remains an open problem.

\subsubsection{Measurement}

The circuit prepares the desired product state in the trivial-character branch
$
N^{-1/2}|0^n\rangle\sum_zc_z|z\rangle.
$
Therefore
\[
p_0
=
N^{-1}\sum_z|c_z|^2
=
2^{-n}\|AB\|_2^2.
\]
If the output coefficient vector is normalized, then $p_0=2^{-n}$. Normalized output does not imply large success probability; it gives the generic exponential postselection penalty. Since $ c_z=\sum_xa_xb_{x\oplus z}\chi(x,x\oplus z), $ Young's inequality gives \[ \|AB\|_2 \le \|A\|_1\|B\|_2 \le 2^{n/2}\|A\|_2\|B\|_2. \]
For normalized inputs, this results in the universal bound
$
p_0\le 1,
$
with equality only if \(\|AB\|_2^2=2^n\).

This estimate alone does not characterize the typical success probability.
For random normalized dense coefficient vectors, the sums defining
\(c_z\) behave heuristically like random walks, suggesting \(\|AB\|_2^2=O(1)\) and 
\(p_0=O(2^{-n})\). Generic dense inputs are therefore expected to represent the least favorable regime for both classical and quantum Clifford multiplication. Many practical applications, however, possess additional algebraic structure in the input multivectors or the computational model. We analyze these distinct complexity regimes below.

\begin{table}[H]
\centering
\caption{Computational regimes for QCM. Effective complexity includes the cost of postselection or amplitude amplification when required.}
\label{tab:analysis_regimes}
\begin{tabular}{lll}
\toprule
Regime & Assumption & Effective complexity\\
\midrule
Dense inputs &
Typical $p_0=O(N^{-1})$ &
$O(N \log N)$ \\

Amplitude amplification &
$p_0\ge1/\mathrm{poly}(\log N)$ &
$O(\mathrm{poly}(\log N))$ \\

Structured inputs &
$k_A, k_B = \text{poly}(\log N)$ &
$\mathrm{poly}(k_A, k_B, \log N)$ \\
\bottomrule
\end{tabular}
\end{table}

\noindent
\paragraph{\textit{Amplitude amplification.}}
Amplitude amplification assumes efficient implementations of reflections about both the prepared input state and the success subspace. Reflection about the success subspace is implemented by a phase flip conditioned on the ancilla register identifying the desired branch, while reflection about the prepared state is done by conjugating the standard reflection about $|0\rangle$ with the state-preparation circuit. Since the Clifford multiplication circuit has linear depth, each Grover iteration incurs only linear circuit overhead. Suppose a family of inputs has success probability
$
p_0 \ge \frac{1}{\operatorname{poly}(n)}.
$
Then the postselected circuit can be promoted to an efficient state-preparation procedure. Since the Clifford multiplication circuit has depth $O(n)$, naive repetition prepares the desired branch with expected depth
$
O\!\left(\frac{\log N}{p_0}\right).
$
Amplitude amplification reduces the postselection overhead quadratically, giving expected depth
$
O\!\left(\frac{n}{\sqrt{p_0}}\right)
=
O\!\left(\frac{\log N}{\sqrt{p_0}}\right),
$
assuming efficient reflections about the prepared input state and the success subspace. Amplitude amplification makes polynomially small success probabilities efficient, but it does not convert exponentially small branch weight into a polynomial-depth algorithm.

\noindent
\paragraph{\textit{Support-adapted projection.}}
The factor \(2^{-n/2}\) is because the algorithm projects the first register onto the uniform superposition
\[
|+\rangle^{\otimes n}
=
2^{-n/2}\sum_{x\in(\mathbb Z_2)^n}|x\rangle,
\]
which averages uniformly over all \(2^n\) blade indices. If only a small subset of basis blades has nonzero coefficients, projecting onto the full superposition averages over many irrelevant basis blades. Now suppose that
\(
S_A=\operatorname{supp}(A)
\)
has cardinality \(s\), and that the normalized support state
\[
|S_A\rangle
=
s^{-1/2}\sum_{x\in S_A}|x\rangle
\]
can be prepared efficiently. Projecting onto \(|S_A\rangle\),
\[
s^{-1/2}
\sum_z
\left(
\sum_{x\in S_A}
a_xb_{x\oplus z}\chi(x,x\oplus z)
\right)
|z\rangle
=
s^{-1/2}\sum_z c_z|z\rangle,
\]
so the success probability becomes
\[
p_S
=
s^{-1}\|AB\|_2^2.
\]

When \(s=\operatorname{poly}(n)\) and
\(
\|AB\|_2^2\ge1/\operatorname{poly}(n),
\)
the postselection overhead is polynomial. Support-adapted projection replaces the ambient dimension \(2^n\) by the effective support size \(s\), but the improvement is conditional on the efficient preparation of the support
state \(|S_A\rangle\).

\paragraph{\textit{Sparse multivectors.}}
If
$
A=\sum_{i=1}^{k_A}a_ie_{x_i},
\qquad
B=\sum_{j=1}^{k_B}b_je_{y_j},
$
where
\(
k_A,k_B=\operatorname{poly}(n),
\)
then only \(k_Ak_B\) blade pairs contribute to the geometric product. Both the classical multiplication and the quantum state-preparation overhead are polynomial. Sparse multivectors provide a practically important setting in which the quantum multiplication primitive can be implemented without incurring exponential postselection overhead.

\subsection{Application}

\subsubsection{Property estimation and decision problems.}

The analysis identifies the cases in which quantum Clifford multiplication efficiently prepares the normalized product state
\[
|AB\rangle
=
\frac1{\|AB\|_2}\sum_z c_z|z\rangle.
\]
The most appropriate computational task is therefore not the reconstruction of every coefficient of the geometric product, but the efficient estimation of properties of the prepared state. As in the Harrow--Hassidim--Lloyd algorithm \cite{harrow2009quantum} and related quantum linear algebra algorithms, the product is accessed through efficiently measurable observables rather than complete classical output. 
This introduces a family of promise decision problems whose acceptance criterion depends on efficiently measurable properties of the Clifford product.

For example, letting
$
M_0=|0^n\rangle\!\langle0^n|
$
denote projection onto the scalar blade, 
\[
\langle AB|M_0|AB\rangle
=
\frac{|\langle AB\rangle_0|^2}{\|AB\|_2^2},
\]
one may verify whether the normalized scalar weight exceeds a prescribed threshold. Similarly, projection onto a fixed grade subspace determines whether the normalized grade-$k$ weight exceeds a threshold, while more generally any efficiently implementable projector onto blades, grades, parity sectors, ideals, or other algebraically defined subspaces defines an associated promise decision problem.

\begin{theorem}[Observable decision problems for QCM]
Let \(\{M_n\}\) be a uniform family of efficiently measurable observables.
If the normalized Clifford product state \(|AB\rangle\) is prepared with
inverse-polynomial success probability, then every promise decision problem
whose acceptance criterion is obtained by thresholding
\[
\langle AB|M_n|AB\rangle
\]
with an inverse-polynomial promise gap belongs to
\(\mathsf{BQP}\). Every promise problem with a polynomial-time many-one or
Turing reduction to estimating such expectation values belongs to
\(\mathsf{BQP}\).
\end{theorem}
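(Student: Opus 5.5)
The plan is to build a polynomial-time quantum algorithm for the promise problem and then close under reductions, using that $\mathsf{BQP}$ is closed under polynomial-time many-one and Turing reductions to problems it decides with bounded error.

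\textbf{Step 1: prepare the state.} By Theorem~\ref{qcm}, one run of the circuit of size $O(n)$ (or $O(n\log n)$ with the dyadic oracle of the Optimized Clifford cocycle oracle proposition) lands in the trivial-character branch with probability $p_0$. By hypothesis $p_0\ge 1/\mathrm{poly}(n)$, given efficient preparation of $|A\rangle$ and $|B\rangle$. We repeat the circuit $O(\log(1/\delta)/p_0)$ times, or use amplitude amplification as in the Analysis section. With probability at least $1-\delta$ this yields one copy of $|AB\rangle$ in $\mathrm{poly}(n)$ time. Heralding the failed runs lets us discard them, so the accepted copies are exact copies of $|AB\rangle$.

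\textbf{Step 2: estimate the observable.} We must fix what ``efficiently measurable'' means. I take it as: $M_n$ has $\|M_n\|\le 1$ (after rescaling by a polynomial bound), and there is a uniform poly-size circuit that, on input a state $\rho$, outputs one bit with acceptance probability $(1+\mathrm{tr}(M_n\rho))/2$. For example, this holds for a projector implemented by a unitary followed by a computational-basis test, or via a Hadamard test. For each fresh copy of $|AB\rangle$ we run this circuit, producing i.i.d.\ Bernoulli samples. By Hoeffding's inequality, $O(\mathrm{poly}(n)\log(1/\delta))$ samples estimate $\langle AB|M_n|AB\rangle$ to additive error $\epsilon/3$, where $\epsilon\ge 1/\mathrm{poly}(n)$ is the promise gap, with failure probability at most $\delta$.

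\textbf{Step 3: decide.} We threshold the estimate at the midpoint of the promise interval. A union bound over the preparation failures and the estimation failure keeps the total error at most $1/3$. The total runtime is the number of samples times the cost of preparation plus measurement, which is polynomial. This places the thresholding problem in $\mathsf{BQP}$.

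\textbf{Step 4: reductions.} For a many-one reduction, composing a polynomial-time classical map with the Step~3 algorithm stays in $\mathsf{BQP}$. For a Turing reduction, the polynomially many adaptive oracle calls are each answered by the estimator with error driven down to $1/\mathrm{poly}$ through median amplification. A union bound then gives overall bounded error, using $\mathsf{BQP}^{\mathsf{BQP}}=\mathsf{BQP}$.

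\textbf{Main obstacle.} I expect the hardest part to be making the hypotheses precise rather than any calculation. The statement silently requires efficient preparation of $|A\rangle$ and $|B\rangle$ (needed both for repetition and for amplitude amplification), a bounded operator norm for $M_n$, and a promise gap measured in the same normalization. Further, queries in a Turing reduction must only ask about instances that themselves satisfy the success-probability and promise conditions. I would state these assumptions explicitly at the outset and check that each step uses only them.
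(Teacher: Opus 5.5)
Your proposal is correct and matches the argument the paper relies on. The paper states this theorem without a written proof, leaning implicitly on the Analysis section's preparation of $|AB\rangle$ by postselection or amplitude amplification and on the HHL-style practice of reading out the product through repeated measurement of efficiently measurable observables; your Steps 1--3 carry this out, and Step 4 adds the standard closure of $\mathsf{BQP}$ under reductions. The hypotheses you make explicit are exactly what the statement leaves implicit, and with them your argument goes through: efficient preparation of $|A\rangle$ and $|B\rangle$, a polynomial bound on $\|M_n\|$ so the promise gap is meaningful, and Turing-reduction queries that themselves satisfy the success-probability and promise conditions.
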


Beyond observables defined directly by blades and grades, the same approach may be suited to decision problems from larger computational tasks whose solutions use Clifford algebras. Classical parameterized algorithms for Hamiltonian cycle, Steiner tree, feedback vertex set, and related graph problems on bounded-treewidth graphs employ
repeated Clifford multiplication through noncommutative subset convolution. If the relevant intermediate quantities can be encoded as efficiently measurable observables of the prepared product state, the quantum Clifford multiplication algorithm could be used within quantum algorithms for these decision problems.

\section{Outlook}
The geometric product is a dense, noncommutative bilinear operation, yet its noncommutativity is encoded entirely by a cocycle over an underlying Abelian group. Knowing this, the computational difficulty of Clifford multiplication is not in the noncommutativity, but in the efficient implementation of its cocycle twist. This invites the question of which group algebras, twisted group algebras, crossed products, and semisimple algebras have comparable quantum multiplication algorithms.

The work also points toward several practical directions. Since Clifford multiplication underpins rotors, spinors, versors, Clifford-valued differential operators, and geometric models of relativistic physics, an efficient multiplication primitive unlocks quantum algorithms devised natively in geometric algebra instead of through matrix representations. The close relationship between Clifford and matrix algebras also leaves open the extent to which these ideas extend to quantum matrix multiplication. Understanding these connections may uncover new quantum primitives for scientific computing, robotics, computer vision, spacetime simulation, and geometric numerical methods. Ultimately, the significance may lie not only in quantum Clifford multiplication itself, but in the wider computational direction it opens. The algorithm shows that harmonic methods need not be restricted to transforms; they may also provide efficient realizations of algebraic multiplication. Whether Clifford multiplication is an isolated example or the first member of a theory of harmonic quantum algebraic computation remains to be seen.

\bibliographystyle{plain}
\bibliography{references}

@book{grassmann1844,
  title     = {Die Lineale Ausdehnungslehre, ein neuer Zweig der Mathematik},
  author    = {Grassmann, Hermann},
  year      = {1844},
  publisher = {Otto Wigand},
  address   = {Leipzig}
}

@book{grassmann1862,
  title     = {Die Ausdehnungslehre: Vollst\"andig und in strenger Form bearbeitet},
  author    = {Grassmann, Hermann},
  year      = {1862},
  publisher = {Enslin},
  address   = {Berlin}
}

@article{clifford1878applications,
  author     = {Clifford, William Kingdon},
  title      = {Applications of {G}rassmann's Extensive Algebra},
  journal    = {American Journal of Mathematics},
  volume     = {1},
  number     = {4},
  pages      = {350--358},
  year      = {1878},
  publisher = {Johns Hopkins University Press}
}

@misc{Muchane2026Rotor,
  author = {Kagwe A. Muchane},
  title = {Rotor-Valued Orientation as Generalized Clifford Geometry},
  year = {2026},
  note = {Manuscript in preparation}
}

@article{Ablamowicz2009Complexity,
  author  = {R. Ab{\l}amowicz and B. Fauser},
  title   = {On Computational Complexity of Clifford Algebra},
  journal = {Journal of Mathematical Physics},
  volume  = {50},
  number  = {5},
  pages   = {053514},
  year    = {2009},
  doi     = {10.1063/1.3115453}
}

@book{Dorst2007GACS,
  author    = {Leo Dorst and Daniel Fontijne and Stephen Mann},
  title     = {Geometric Algebra for Computer Science: An Object-Oriented Approach to Geometry},
  publisher = {Morgan Kaufmann},
  address   = {Burlington, MA},
  year      = {2007},
  isbn      = {9780123749420}
}

@article{harrow2009quantum,
  title = {Quantum Algorithm for Linear Systems of Equations},
  author = {Harrow, Aram W. and Hassidim, Avinatan and Lloyd, Seth},
  journal = {Phys. Rev. Lett.},
  volume = {103},
  issue = {15},
  pages = {150502},
  numpages = {4},
  year = {2009},
  month = {Oct},
  publisher = {American Physical Society},
  doi = {10.1103/PhysRevLett.103.150502},
  url = {https://link.aps.org/doi/10.1103/PhysRevLett.103.150502}
}

@techreport{coppersmith1994approximate,
  title={An Approximate Fourier Transform Useful in Quantum Factoring},
  author={Coppersmith, Don},
  year={1994},
  institution={IBM Research},
  number={RC19642}
}

@book{Hestenes1966,
  title     = {Space-Time Algebra},
  author    = {Hestenes, David},
  year      = {2015},
  publisher = {Gordon and Breach},
  address   = {New York}
}

@book{HestenesSobczyk1984,
  title     = {Clifford Algebra to Geometric Calculus: A Unified Language for Mathematics and Physics},
  author    = {Hestenes, David and Sobczyk, Garret},
  year      = {1984},
  publisher = {Springer},
  series    = {Fundamental Theories of Physics},
  address   = {Dordrecht}
}

@inproceedings{wareham2005applications,
  title     = {Applications of Conformal Geometric Algebra in Computer Vision and Graphics},
  author    = {Wareham, Rich and Cameron, Jonathan and Lasenby, Joan},
  booktitle = {Computer Algebra and Geometric Algebra with Applications},
  pages     = {329--349},
  year      = {2005},
  publisher = {Springer}
}

@article{bayro2001geometric_neural,
  title     = {Geometric neural computing},
  author    = {Bayro-Corrochano, Eduardo J.},
  journal   = {IEEE Transactions on Neural Networks},
  volume    = {12},
  number    = {5},
  pages     = {968--986},
  year      = {2001},
  publisher = {IEEE}
}

@inproceedings{ruhe2023clifford,
  title     = {Clifford Group Equivariant Neural Networks},
  author    = {Ruhe, David and Brandstetter, Johannes and Forr{\'e}, Patrick},
  booktitle = {Advances in Neural Information Processing Systems (NeurIPS)},
  volume    = {36},
  pages     = {62922--62990},
  year      = {2023},
  publisher = {Curran Associates, Inc.}
}

@inproceedings{ruhe2023geometric,
  title     = {Geometric Clifford Algebra Networks},
  author    = {Ruhe, David and Gupta, Jayesh K. and Brandstetter, Johannes},
  booktitle = {Proceedings of the 40th International Conference on Machine Learning},
  series    = {Proceedings of Machine Learning Research},
  volume    = {202},
  pages     = {29461--29486},
  year      = {2023},
  publisher = {PMLR}
}

@article{buchholz2008clifford,
  title     = {On Clifford neurons and Clifford multi-layer perceptrons},
  author    = {Buchholz, Sven and Sommer, Gerald},
  journal   = {Neural Networks},
  volume    = {21},
  number    = {7},
  pages     = {925--935},
  year      = {2008},
  publisher = {Elsevier}
}

@inproceedings{brehmer2023geometric,
  title     = {Geometric Algebra Transformers},
  author    = {Brehmer, Johann and de Haan, Pim and Satorras, V{\'\i}ctor Garcia and Cohen, Taco},
  booktitle = {Advances in Neural Information Processing Systems (NeurIPS)},
  volume    = {36},
  year      = {2023}
}

@book{bayro2001geometric,
  title     = {Geometric Computing for Perception Action Systems: Concepts, Algorithms and Scientific Applications},
  author    = {Bayro-Corrochano, Eduardo},
  year      = {2001},
  publisher = {Springer}
}

@article{AlbuquerqueMajid2002,
  author={H. Albuquerque and S. Majid},
  title={Clifford algebras obtained by twisting of group algebras},
  journal={Journal of Pure and Applied Algebra},
  volume={171},
  number={2--3},
  pages={133--148},
  year={2002}
}

@book{Lounesto2001,
  author={Pertti Lounesto},
  title={Clifford Algebras and Spinors},
  edition={2},
  publisher={Cambridge University Press},
  year={2001}
}

@article{Castelazo2021QuantumConvolution,
  title = {Quantum algorithms for group convolution, cross-correlation, and equivariant transformations},
  author = {Castelazo, Grecia and Nguyen, Quynh T. and De Palma, Giacomo and Englund, Dirk and Lloyd, Seth and Kiani, Bobak T.},
  journal = {Phys. Rev. A},
  volume = {106},
  issue = {3},
  pages = {032402},
  numpages = {19},
  year = {2022},
  month = {Sep},
  publisher = {American Physical Society},
  doi = {10.1103/PhysRevA.106.032402},
  url = {https://link.aps.org/doi/10.1103/PhysRevA.106.032402}
}

@article{shor1997polynomial,
  author    = {Peter W. Shor},
  title     = {Polynomial-Time Algorithms for Prime Factorization and Discrete Logarithms on a Quantum Computer},
  journal   = {SIAM Journal on Computing},
  volume    = {26},
  number    = {5},
  pages     = {1484--1509},
  year      = {1997},
  doi       = {10.1137/S0097539795293172}
}

@inproceedings{kitaev1995quantum,
  author    = {A. Yu. Kitaev},
  title     = {Quantum Measurements and the Abelian Stabilizer Problem},
  booktitle = {Electronic Colloquium on Computational Complexity (ECCC)},
  year      = {1995},
  note       = {Expanded version in arXiv:quant-ph/9511026}
}

@article{lomont2004hidden,
  author  = {Chris Lomont},
  title   = {The Hidden Subgroup Problem -- Review and Open Problems},
  journal = {arXiv preprint quant-ph/0411037},
  year    = {2004}
}

@misc{muchane2025stateoperator,
  title={The State-Operator Clifford Compatibility: A Real Algebraic Framework for Quantum Information}, 
  author={Kagwe A. Muchane},
  year={2026},
  eprint={2512.07902},
  archivePrefix={arXiv},
  primaryClass={quant-ph},
  url={https://arxiv.org/abs/2512.07902}, 
}

@inproceedings{AlmanWilliams2021,
  author    = {Josh Alman and Virginia Vassilevska Williams},
  title     = {A Refined Laser Method and Faster Matrix Multiplication},
  booktitle = {Proceedings of the 2021 ACM-SIAM Symposium on Discrete Algorithms (SODA)},
  pages     = {522--539},
  year      = {2021},
  publisher = {Society for Industrial and Applied Mathematics},
  doi       = {10.1137/1.9781611976465.32},
  URL       = {https://epubs.siam.org/doi/10.1137/1.9781611976465.32}
}

\end{document}